\newcommand{\forkjoin}{fork{\text -}join}
\newcommand{\yo}{\gamma^{out}}
\newcommand{\yi}{\gamma^{in}}
\newcommand{\mO}{\mathcal{O}}
\newcommand{\Cmax}{C_{max}}
\newcommand{\p}{p}
\newcommand{\n}{n}
\newcommand{\M}{M}
\newcommand{\m}{m}
\title{Scheduling Fork-Join Task Graphs to Heterogeneous Processors}
\author{Huijun Wang and Oliver Sinnen} 
\institute{Department of Electrical, Computer, and Software Engineering \\
University of Auckland, New Zealand}
\begin{document}

\maketitle


\begin{abstract}
The scheduling of task graphs with communication delays has been extensively studied. Recently, new results for the common sub-case of fork-join shaped task graphs were published, including an EPTAS and polynomial algorithms for special cases. These new results modelled the target architecture to consist of homogeneous processors. However, forms of heterogeneity become more and more common in contemporary parallel systems, such as CPU--accelerator systems, with their two types of resources.
In this work, we study the scheduling of fork-join task graphs with communication delays, which is representative of highly parallel workloads, onto heterogeneous systems of related processors. We present an EPAS, and some polynomial time algorithms for special cases, such as with equal processing costs or unlimited resources. Lastly, we briefly look at the above described case of two resource-types and its implications. It is interesting to note, that all results here also apply to scheduling independent tasks with release times and deadlines.
\end{abstract}

\section{Introduction}

Task scheduling is one of the steps in parallelising a computational workload, where units of work (called tasks) are scheduled onto the available resources, to optimise for the schedule length (makespan). The structure of such a workload can be represented by a directed acyclic graph, where nodes represent tasks, edges represent data dependencies between them, and weights represent the associated costs of computation (processing) and communication (data transfer). This problem is generally intractable and difficult to approximate as well. 
In a typically studied model, tasks are scheduled across homogeneous processors, with communications between each other, each incurring communication delays. With varied and evolving architectures, it is becoming more appropriate to model systems of heterogeneous processors. Although results are generalised to an arbitrary number of processors, a typical example of heterogeneity consists of two types of resources, each representing a general purpose processor or one type of accelerator resource such as GPU, FPGA, or other specialised hardware, and sometimes with clusters of each kind in proximity. Many high performance parallel computing systems, in particular the largest supercomputers \cite{top500}, have such an architecture. 
In terms of the workload structure, we use the fork-join structure in particular, which represents a significant subclass of parallel computations, as well as being a basic substructure that is common to other structures such as the series-parallel graph. It consists of a source task forking into many branch tasks, which in turn join back into a sink task. The source task prepares and divides the branch tasks, and the sink task gathers and combines results for subsequent steps. This can represent master-worker types of computations etc., and is representative of highly parallel workloads in general. More examples include MapReduce computation with frameworks like Hadoop or scatter/gather communication when using MPI parallelism \cite{MPI}. Also the parallel Executor-Service in Java supports such a fork-join pattern \cite{oracle}. Even if these applications may need to schedule dynamically, or have other nuances, the theoretic treatment here is important to guiding heuristic design, creating benchmarks, and adds to the completeness of theoretical results in classical scheduling.


Workloads that make use of accelerators are often highly parallel, and lend themselves to the fork-join structure. This paper investigates the optimisation problems that involve scheduling fork-join structures to heterogeneous processors, and describes some theoretical solutions to them, which have been extended from scheduling to homogeneous processors. In all cases, the algorithms also apply to the equivalent problems in scheduling independent tasks with release times and deadlines. 


The rest of the paper is organised as follows. Section \ref{sec:problem} describes the problem formally and summarises our results. Section \ref{sec:related} outlines related works. Section \ref{sec:eptas} describes an EPAS parameterised by the ratio of processor speeds and section \ref{sec:exact} describes some exact algorithms for special cases. 


\section{Problem Definition} \label{sec:problem}

In this section we define the problem, including models of the processing systems as well as our workload. We begin with the task graph and the fork-join structure.

\subsection{Task Graph}
A task graph is a directed acyclic graph with nodes representing tasks and edges representing precedence constraints (which are communications). 

\begin{figure}[ht] \label{DAG}
\centering
\includegraphics[width=0.7\textwidth]{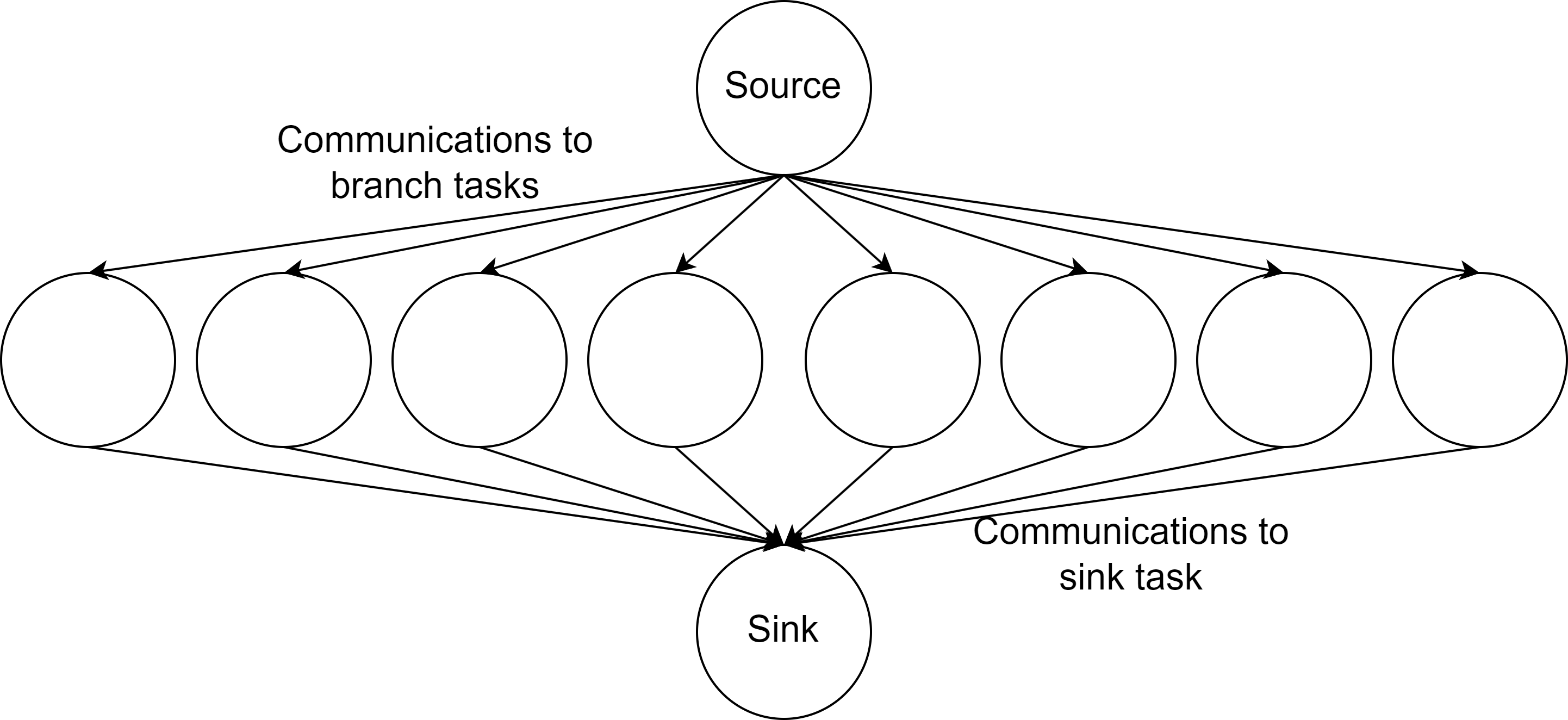}
\caption{Example of the parallel fork-join structure}
\end{figure}

The shape of a fork-join task graph is shown figure \ref{DAG}. It has a set of tasks $\boldsymbol{J}$, including the source task $\boldsymbol{j_{src}}$ and sink task $\boldsymbol{j_{sink}}$. Each task $\boldsymbol{j \in J}$ has an associated processing cost $\boldsymbol{\p_j \in P}$, and each branch task $j \in J \setminus \{j_{src}, j_{sink}\}$ has an incoming communication from the source, with cost $\boldsymbol{\yi_j \in \Gamma}$, and outgoing communication to the sink, with cost $\boldsymbol{\yo_j \in \Gamma}$. It is to be scheduled on the following system.

\subsection{System Model}

In systems of uniformly related heterogeneous processors, each processor $\boldsymbol{\m \in \M}$ has a speed $\boldsymbol{s_\m \in S}$, and executes a task $j$ in time $\p_j/ s_\m $. 

Communication times are simply equal to the communication costs (there is no heterogeneity involved with the processing of communications), and no communication costs are incurred between tasks scheduled to the same processor (this is a common assumption in other works, e.g. \cite{beaumont2020scheduling,Drozdowski2009:spp}). Systems are modelled this way due to slower communications between processing resources not sharing memory. 

Tasks scheduled to the same processor as the source or sink are said to be \emph{local}, and tasks scheduled to other processors are said to be \emph{remote}.


\subsection{Schedule}

A schedule consists of processor allocations $\boldsymbol{\m_j \in \M}$ and start time allocations $\boldsymbol{\sigma_j \in \mathbb{N}_0}$ for all tasks $j \in J$. The start time allocations are equivalent to a total order ($\leq_\sigma$) of execution over each set of tasks scheduled to the same processor, because optimally, tasks start as early as possible given their order. 
Therefore, for $i,j \in J$ scheduled consecutively on the same processor
(meaning $i\leq_\sigma j$ and $\nexists \, k \in J\colon i\leq_\sigma k \leq_\sigma j$), the start time of $j$ is 

\[
\sigma_j=
\begin{cases}
\max(\sigma_i+\p_i / s_{m_{i}},\, \p_{src} / s_{m_{src}} + \yi_j), & \text{if } \m_j\neq\m_{src}\\
\sigma_i+\p_i / s_{m_{i}}, & \text{if } \m_j=\m_{src}
\end{cases}
\]

For $j = \min\leq_\sigma$, the first branch task on a processor,
\[
\sigma_j=
\begin{cases}
\p_{src} / s_{m_{src}} + \yi_j, & \text{if } \m_j\neq\m_{src}\\
\p_{src} / s_{m_{src}}, & \text{if } \m_j=\m_{src}
\end{cases}
\]

The objective is to find a schedule with optimal schedule length.

While with homogeneous processors we could have ignored the processing times of the source and sink tasks without loss of generality, this simplification does not hold when processor systems are heterogeneous. With the execution times of the source and sink tasks taken into account, the schedule length to be minimised is:

\[
\max_{j\in J}\left(\sigma_j + \p_j / s_{m_{j}} + 
\begin{cases}
\yo_j, & \text{if } \m_j\neq\m_{sink}\\
0, & \text{if } \m_j=\m_{sink}
\end{cases}  \:\right) + \p_{sink} / s_{m_{sink}}
\]



\subsection{Scheduling with Release Times and Deadlines}
\label{sec:releaseTimes}
Scheduling fork-joins is related to scheduling independent tasks with release times and deadlines, where each task $j$ has an associated release time $\boldsymbol{r_j}$ and a deadline $\boldsymbol{d_j}$. When treated as a feasibility problem of scheduling under time $T$, a release time $r_j$ would be equivalent to the incoming communication delay added to the execution time of the source $\p_{src}/s_{m_{src}} + \yi_j$, and a deadline $d_j$ would be equivalent to $T - \p_{sink}/s_{m_{sink}} - \yo_j$. However, fork-join scheduling has to additionally consider the source and sink tasks and zeroed communication costs when tasks are scheduled to the same processor as the source or sink. Nevertheless, the results of this work for fork-joins, in particular the scheduling of the tasks allocated on remote processors, will also apply to scheduling with release times and deadlines. When extended to a minimisation problem, a minimisation objective when working with deadlines is $L_{max}$, the maximum lateness.

\subsection{Source and Sink Allocation for Fork-Join Task Graphs}

With the processing times of the source and sink tasks being different on each processor, they cannot be simplified away as with identical homogeneous processors, where the only choice that matters is whether the source and sink are on the same processor. Here, every different choice of processor type for the source and sink will affect the schedule. 
However, in the worst case, in a fully heterogeneous system and without any further analytical insight, an algorithm can be repeated $|S|^2$ times for each choice of source and sink processors, which would not change the class of complexity, or algorithm (for a theoretical purpose), summarised as the following.

\begin{proposition} \label{procs}
    If a fork-join scheduling problem $A$ can be solved in $sched(A)$ time with source and sink allocations given, then $A$ can be solved in $|S|^2 sched(A)$ time at worst.
\end{proposition}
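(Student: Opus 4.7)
The plan is to prove the proposition by a straightforward enumeration argument, exploiting the fact that in a uniformly related system the residual instance depends only on the \emph{speed} of the source and sink processors, not on their identity. Since the algorithm for $A$ works once the source and sink allocations are given, I only need to show that a small number of allocations must be tried to cover an optimal one.

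First I would observe that if two processors share the same speed $s \in S$, they are interchangeable: permuting them in any schedule yields a schedule of the same length, because communication delays depend only on whether tasks are co-located with the source/sink, and execution times depend only on speed. Hence, once the speed of the source processor is fixed, the choice of which concrete processor of that speed plays the role of source does not affect the optimum; the same holds for the sink.

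Second I would enumerate all pairs $(s_{src}, s_{sink}) \in S \times S$, yielding at most $|S|^2$ cases. For each pair, pick arbitrary representative processors $m_{src}, m_{sink}$ of the respective speeds (distinct whenever such processors exist), and invoke the subroutine of $A$ on this fixed allocation in $sched(A)$ time. Return the best schedule across all trials. Any optimal schedule assigns some speeds to its source and sink, and by the interchangeability observation the corresponding enumerated trial attains its objective; thus the returned schedule is optimal, and the total running time is $|S|^2 \cdot sched(A)$.

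The main subtlety I expect to handle is the corner case where $s_{src} = s_{sink}$: an optimum might place source and sink on the same processor, or on two distinct processors sharing that speed, and these two sub-cases give structurally different residual instances (in one, tasks on that processor pay zero incoming and zero outgoing delay; in the other, only one of those is avoided). Covering both variants adds at most $|S|$ extra trials, which is absorbed into the stated $|S|^2$ bound. With that taken care of, the correctness and the complexity count are immediate.
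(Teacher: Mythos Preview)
Your proposal is correct and follows essentially the same enumeration argument as the paper, which justifies the proposition only by the sentence preceding it (``an algorithm can be repeated $|S|^2$ times for each choice of source and sink processors''). You are in fact more careful than the paper in spelling out the interchangeability of same-speed processors and the same-versus-different-processor corner case when $s_{src}=s_{sink}$.
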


Using the $\alpha|\beta|\gamma$ notation from \cite{GRAHAM1979287}, we have the following table to summarise our results.

\begin{table}[ht]
    \centering 
    \caption{Summary of Results} \label{tab:summary}
    \medskip
    \begin{tabular}{|l|l|l|}
    Problem & \hfill Result & \hspace{10pt} Section \\ \hline \hline
    $Qm|fork{\text -}join, c_{ij}|C_{max}$ && \\
    $Qm|r_j|L_{max}$ & \hfill \textcolor{gray}{sNP-hard}, EPAS &\hfill  \ref{sec:eptas}\\ \hline
    
    $Q|fork{\text -}join, c_{ij}, p_j=p|C_{max}$ & \hfill \textcolor{gray}{Open}, & \\
    $Q|r_j, p_j=p|L_{max}$& \hfill ($OPT{\text +}p_{max}$)-approx. &\hfill  \ref{sec:bipartite} \\ \hline
    
    $Q2|fork{\text -}join, c_{ij}, p_j=p|C_{max}$ && \\
    $Q2|r_j, p_j=p|L_{max}$ & \hfill P &\hfill  \ref{sec:2proc}\\ \hline
    
    $Q\infty|\forkjoin, c_{ij}, p_j=p|C_{max}$ && \\
    $Q\infty|r_j, p_j=p|L_{max}$ & \hfill P &\hfill  \ref{sec:UnlimitedProc}\\ \hline 
    
    \end{tabular}
\end{table}

\section{Related work} \label{sec:related} 

Heterogeneous systems have been studied as early as \cite{GareyJohnson78}. Recently, other theoretical results have been presented for scheduling to heterogeneous systems regarding its complexity \cite{KnopKoutecky:17:Scheduling-Meets}, as well as an EPTAS for scheduling independent tasks to uniformly related processors \cite{10.1007/978-3-642-02927-1_47}. There has existed a 2-approximation for scheduling to unrelated processors \cite{LenstraTardos:90:Approximation-algorithms}, and a PTAS for minimising the weighted finish times on related processors \cite{chekuri2001ptas}. 

Recent attention has also been on the kind of system with two resource types in particular (modelling the CPU--accelerator set-up), with both theoretical \cite{10.1007/978-3-030-29400-7_9} and practical results \cite{ait2020efficient} presented, and there has also been an entire review on scheduling to this system of two resource types \cite{beaumont2020scheduling}.

Regarding the scheduling of fork-join structures, recent work has been presented on an EPTAS \cite{jansen2021eptas} and other algorithms \cite{fork-join} for scheduling to homogeneous systems.




\section{Efficient Parameterised Approximation Scheme} \label{sec:eptas}

We describe an Efficient Parameterised Approximation Scheme (EPAS) for heterogeneous systems of uniformly related processors, as extended from an EPTAS for the homogeneous model \cite{jansen2021eptas}. The EPAS is parameterised by the ratio between the fastest and slowest processor speeds $\frac{s_{max}}{s_{min}}$ and returns a solution within $1 + \mO(\epsilon)$ of the optimal in $\mO(f(\frac{1}{\epsilon}, \frac{s_{max}}{s_{min}}) \times poly(|J|))$ time, where $|J|$ is the number of tasks. 

The basic procedure is outlined as follows. Given a feasibility problem of scheduling under makespan $T$, we simplify the instance and set up an ILP to decide if a schedule exists for the simplified instance - the simplifications maintain a guarantee that an optimum for the simplified instance is within a range of that to the original. 

The simplifications serve to bound the number of task sizes etc., such that there is a limited number of possible schedules (configurations) on each processor. The accuracy parameter $\epsilon$ decides the resolution of the simplified instance and is inversely related to the runtime of the approximation scheme. In summary, processing and communication times are truncated, processing times too small to truncate are represented by placeholders, and sizes of gaps limited. Here they are in more detail: 

\begin{itemize}
    \item Truncate communication times to the nearest multiple of $\epsilon T$.
    \item Distinguish between big and small tasks by processing time, with a cutoff of $\epsilon^2 T$, which will become the minimum task size (smaller tasks will be "bundled" together).
    \item Truncate the processing times of big tasks into the nearest order of magnitude:
    \begin{equation*} \label{eq:bigTaskSize}
    \{(1+\epsilon)^n\epsilon^2 T \mid n\in \mathbb{N}_0\}
    \end{equation*}
    so that their errors are proportional to their sizes, and when added together the total error in a configuration is bounded and proportional to $T$ ($\epsilon T$). Now the tasks can be categorised into a finite number of sizes $(\p, \yi, \yo) \in P \times \Gamma^2$.
    \item Replace the total volume of small tasks with an equivalent number of identical placeholders of size $\epsilon^3 T$ - for each communication class. Only an error of $\epsilon^3 T$ is introduced for each of the $1/\epsilon^2$ communication classes, and in total $\epsilon T$.
    \item Bundle the placeholder tasks into blocks of $1/\epsilon$, with total processing time $\epsilon^2 T$ (with a limited number of stubs of $\leq 1/\epsilon$ placeholders being allowed).
    \item The last step limits the resolution of start times by limiting the sizes of gaps to an increment ($\epsilon^2 T$).
\end{itemize}

After the simplifications, there are $|\Gamma| = 1/\epsilon$ different communication times and $|P| = 1/\epsilon \log 1/\epsilon$ different processing times, i.e. a constant number in terms of the problem input size.

The ILP itself is what can be called a configuration ILP. We can describe a configuration as a (maximal) set of tasks that can be feasibly scheduled onto a single processor. More accurately, it is defined as a multiset of task sizes, representing the numbers of slots of each size (as created in the simplifications) that can be provided by a processor adopting such a configuration. The ILP selects a configuration for each processor, out of all possible configurations (which are bounded in number by the simplifications), such that all the configurations together provide enough slots to accommodate the given numbers of tasks of each size.


Let the set of all configurations be denoted by $\boldsymbol{C}$. Variables $x_C \ \forall C \in \boldsymbol{C}$ select the number of each configuration used, and each configuration $C$ provides a $C_{\p, \yi, \yo}$ number of slots for tasks with processing time $\p$ and communication times $\yi$ and $\yo$.





\subsection{EPAS for Related Processors $Qm|\forkjoin, c_{ij}|C_{max}$}

We now describe the formulation of the EPAS for scheduling the fork-join graph to a heterogeneous system of uniformly related processors, parameterised by the ratio in speed between the fastest and slowest processors, $\frac{s_{max}}{s_{min}}$.
Remember, with related heterogeneous processors, each processor $\m \in \M$ completes a task $j$ in $\p_j/ s_\m$ time, where $s_\m \in S$ is the speed of the processor. The input to the problem is then a set of tasks with given processing costs and communication times (where communication times are as given and do not depend on processing speed), and a set of processors with given speeds.

With this, we first redefine a \emph{small task} to be any task that takes less than $t_\textsc{small} = \epsilon^2 T$ to execute on the \emph{slowest} processor (and therefore has cost no greater than $s_{min}\epsilon^2 T$), and the rest are treated as \emph{big tasks}. This way, a small task will never become a "big task" (in the original sense that it takes more than $\epsilon^2 T$ time to execute) moving from one processor to another, while a big task may become a "small task" (taking less than $\epsilon^2 T$ to execute) depending on the processor. The latter has no impact on the formulation of the following ILP and does not impact the complexity. Small tasks will be replaced by one or more placeholders of size $\p_\textsc{small} = s_{min}\epsilon^3 T$, bounding the number of processing costs. Something to note is that instead of other representations of small tasks in PTAS for scheduling, placeholders are used here to manage communication costs, and although they are fine-grained in order to reduce rounding error, they are bundled together on processor configurations so as not to increase complexity.

With the maximum processing cost being $s_{max}T$, and $(1+\epsilon)^n s_{min}\epsilon^2 T \leq s_{max}T$, there are $n \leq \mO(\frac{1}{\epsilon} (\log \frac{s_{max}}{s_{min}} + \log\frac{1}{\epsilon}))$ cost sizes. On the other hand, the configurations only use $\mO(\frac{1}{\epsilon} \log\frac{1}{\epsilon})$ different rounded processing times (as with before). Let these processing times be denoted $t \in P'$.

The input to the new ILP will again consist of the numbers of tasks of each size (cost) category $(\p, \yi, \yo)$, where $\p$ here is the processing cost (instead of the actual processing times). This includes the numbers of placeholder tasks that will replace the combined volume of small tasks for each category of $\yi, \yo \in \Gamma^2$. Therefore we have the following values in the right hand side of the ILP, where $N_{\textsc{small}, \yi, \yo}$ is the number of placeholder tasks to replace the small tasks.

\begin{gather*}
N_{\p, \yi, \yo} \ \ \forall (\p, \yi, \yo) \in P \times \Gamma^2 \\
N_{\textsc{small}, \yi, \yo} \ \ \forall (\yi, \yo) \in P \times \Gamma^2 
\end{gather*}

We round each of the processing speeds $s_\m \in {S}$ down to the nearest $s_{min}(1+\epsilon)^n, n \in \mathbb{N}_0$, where $s_{max}$ is the highest, giving $\frac{ \log(s_{max}/s_{min}) }{ \log(1+\epsilon)}$ types of processors. 




Now, we introduce variables which select the numbers of tasks of each size (cost) to allocate to each type of processor. This includes large sizes and the placeholders for small tasks.
\begin{gather*}
\n_{\p, \yi, \yo}^{s} \ \forall s \in S \\
\n_{\textsc{small}, \yi, \yo}^{s} \ \forall s \in S 
\end{gather*}

The following constraints make sure that the total number of each task size allocated on all processors is at least as large as the given number of tasks from the input.


\begin{equation} \label{eq:first}
\sum_{s \in {S}} \n_{\p, \yi, \yo}^{s} \geq N_{\p, \yi, \yo}
\end{equation}
\begin{equation}
\sum_{s \in {S}} \n_{\textsc{small}, \yi, \yo}^{s} \geq N_{\p, \yi, \yo}
\end{equation}
\begin{equation*}
\forall (\p, \yi, \yo) \in P \times \Gamma^2
\end{equation*}

Let $(t, \yi, \yo)$ describe the size category of a task in terms of its actual execution time $t_{\p_j, s}$ on an assigned processor with speed $s$. For a target makespan $T$, the processing times of each task on each processor is put in terms of $T$, and from this it is known which size category $(t, \yi, \yo)$ the task belongs to on each processor. In the end, each processor will adopt some configuration which is chosen from the same set of configurations as in the original formulation for the homogeneous-processors case, with each configuration being some set of tasks that could feasibly be scheduled together on this processor, encoded in the ILP as a vector of multiplicities of the task sizes. 

Let the variables $x^s_C \ \forall s \in S, \forall C \in \boldsymbol{C}$ decide the number of processors of speed $s$ to use configuration $C$, and let $C_{t, \yi, \yo}$ denote the number of tasks of size $(t, \yi, \yo)$ that can go into the configuration $C$. Then, for each processor speed $s$ and task size $(t_{\p, s}, \yi, \yo)$, the total number of tasks of this size which can be accommodated by the selected configurations must be enough for the total number of tasks which will execute with this size on this type of processor.

\begin{equation}
\sum_{C\in\boldsymbol{C}} x^s_{C} C_{t, \yi, \yo} - \sum_{\p\in P} n^s_{\p, \yi, \yo} \cdot a_{\p/s=t} \geq 0
\end{equation}
\begin{equation*}
\forall s \in S, \ \forall t \in P', \ \forall(\yi, \yo) \in \Gamma^2 
\end{equation*}

where $a_{\p/s=t}$ is defined by

\begin{gather*}
a_{\p/s=t}=
\begin{cases}
1, & \text{\small if with speed $s$, task with cost $\p$ executes in time category $t$} \\
0, & \text{\small otherwise}
\end{cases} \\
\forall s \in S, \ \forall \p \in P, \ \forall t \in P'
\end{gather*}

The placeholders for small tasks are treated slightly differently. The different lengths of time it takes to execute a placeholder task on each type of processor is encoded into $a_{\textsc{small}/s} \ \forall s \in S$, where it is simply $a_{\textsc{small}/s} = \p_\textsc{small}/s$. Next, for each size category (which remain differentiated by communication times) $\yi, \yo$, let $C_{\textsc{small}, \yi, \yo} \ \forall C \in \boldsymbol{C}$ represent the available time provided by $C$ for placeholders of this category. Then, for each type of processor, the total time given to placeholders of each category ($\yi, \yo$) by the selected configurations must accommodate the total amount of time it will take to execute the placeholders assigned to it. In addition, some big tasks will take less than $t_\textsc{small}$ to execute on faster processors, where they can then be treated as small tasks and added to the total volume of small task execution time assigned to these processors. For such tasks, the factors $a_{\p/s} \ \forall s \forall \p$ will give the amount of time they will each add to the total volume of small tasks on processors of speed $s$. 

\begin{equation*}
a_{\p/s}=
\begin{cases}
\p/s, & \text{\small if } \p/s \leq t_\textsc{small} \\
0, & \text{\small otherwise}
\end{cases} \\
\end{equation*}
\begin{equation}
\sum_{C\in\boldsymbol{C}} x^s_{C} C_{\textsc{small}, \yi, \yo} 
- \sum_{\p\in P} n^s_{\p, \yi, \yo} \cdot a_{\p/s} - n^s_{\textsc{small}, \yi, \yo} \cdot a_{\textsc{small}/s}  \geq 0
\end{equation}
\begin{equation*}
\forall s \in S, \ \forall t \in P', \ \forall(\yi, \yo) \in \Gamma^2 
\end{equation*}



Finally, for each type of processor, a number of configurations is selected that is no more than the available processors of that type.

\begin{equation} \label{eq:last}
\sum_{C\in\boldsymbol{C}} x_C^s \leq |\M^s|, \ \forall s \in S
\end{equation}

In summary, the ILP is defined by constraints \ref{eq:first}-\ref{eq:last}, and has variables
$x^s_C \ \forall s \forall C$, $n^s_\p \ \forall s \forall \p$, and $n^s_\textsc{small} \ \forall s$. 

\begin{theorem} \label{epas}
There is an EPAS for scheduling fork-join structures to uniformly related heterogeneous systems that gives a schedule in 
\[ 2^{\mathcal{O}({1/\epsilon^4\log^2 1/\epsilon \log\frac{s_{max}}{s_{min}}})} \mathcal{O}(\log^2 N) + \mathcal{O}(N\log N) \]
time that is no worse than $\frac{(1+6\epsilon)}{(1-2\epsilon-\epsilon^2)} \, OPT $.
\end{theorem}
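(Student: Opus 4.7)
\medskip

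\noindent\textbf{Proof plan.} My plan is to reduce the optimisation problem to a feasibility question "does a schedule of length $\leq T$ exist?", solve that question via the configuration ILP of constraints~(\ref{eq:first})--(\ref{eq:last}), and then binary-search $T$ over a discrete grid to recover the claimed approximation. By Proposition~\ref{procs} we may assume source and sink allocations are fixed, absorbing only a constant factor. The candidate values of $T$ lie between a trivial lower bound (e.g.\ $\sum_j p_j / (s_{max}|M|)$) and the makespan of a list schedule, so binary search on a geometric grid of ratio $1+\epsilon$ needs only $\mathcal{O}(\log N)$ queries, each answered by one ILP solve; we output the schedule produced for the smallest feasible $T^*$.

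\medskip
\noindent\textbf{Approximation bound.} For a fixed $T$ I would prove the key equivalence: if the original instance admits a schedule of makespan $T$, then the simplified instance (after truncating $\gamma$-values, rounding big $p$-values into the geometric grid $(1+\epsilon)^n s_{min}\epsilon^2 T$, replacing small tasks by $p_\textsc{small}=s_{min}\epsilon^3 T$ placeholders, bundling placeholders, and discretising gaps) admits a schedule of makespan $\leq (1+c\epsilon)T$ for a small constant $c$, and the induced configuration ILP is feasible. Conversely, given a feasible ILP solution I would reconstruct an actual schedule of the simplified instance of length $\leq (1+c\epsilon)T$ by packing each processor according to its chosen configuration. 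The additive slack accumulates as $\epsilon T$ from communication truncation, $\epsilon T$ from the big-task geometric rounding (each error is $\leq \epsilon$ times the task length, summing to $\leq \epsilon T$ per processor), $\epsilon T$ from the placeholder replacement summed over the $1/\epsilon^2$ communication classes, and $\mathcal{O}(\epsilon T)$ from the gap discretisation. Rounding each speed $s_m$ down to the nearest $s_{min}(1+\epsilon)^n$ inflates every execution time by a factor $\leq(1+\epsilon)$, contributing an additional multiplicative $(1+\epsilon)$ factor. Combining the additive and multiplicative slack, and propagating it through binary search with grid ratio $1+\epsilon$, yields exactly the ratio $(1+6\epsilon)/(1-2\epsilon-\epsilon^2)$ after arithmetic simplification.

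\medskip
\noindent\textbf{Runtime.} After simplification there are $|\Gamma|=1/\epsilon$ communication classes, $|P'|=\mathcal{O}(1/\epsilon\log1/\epsilon)$ rounded processing times appearing in configurations, and $|S|=\mathcal{O}\!\left(\frac{1}{\epsilon}\log\frac{s_{max}}{s_{min}}\right)$ speed classes. A configuration fitting in $T$ contains $\mathcal{O}(1/\epsilon^2)$ slots, so $|\boldsymbol{C}|\leq 2^{\mathcal{O}(1/\epsilon^2\log^2 1/\epsilon)}$. The ILP has $|S|\cdot|\boldsymbol{C}|$ configuration variables and $\mathcal{O}(|S|\cdot|P|\cdot|\Gamma|^2)$ assignment variables, giving dimension $d=2^{\mathcal{O}(1/\epsilon^2\log^2 1/\epsilon)}\cdot\mathcal{O}(\log(s_{max}/s_{min})/\epsilon)$. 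Applying Kannan's ILP algorithm (runtime $d^{\mathcal{O}(d)}$ in the coefficient bit-length $\mathcal{O}(\log N)$) yields the first summand $2^{\mathcal{O}(1/\epsilon^4\log^2 1/\epsilon\,\log(s_{max}/s_{min}))}\mathcal{O}(\log^2 N)$. The outer binary search multiplies by $\mathcal{O}(\log N)$ (absorbed in the same term), and the additive $\mathcal{O}(N\log N)$ accounts for reading and bucketing the tasks into size classes and for assembling the final schedule from the ILP solution.

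\medskip
\noindent\textbf{Main obstacle.} The principal new difficulty compared with the homogeneous EPTAS of~\cite{jansen2021eptas} is that a task's size category depends on the processor it lands on: a task that is "big" on the slowest processor may execute in less than $t_\textsc{small}=\epsilon^2 T$ on a fast one, and must then be counted into the small-task volume of that processor. This is precisely what the indicator $a_{p/s=t}$ and the auxiliary $a_{p/s}$ in constraint~(4) are designed to handle, but the error analysis has to show that routing these tasks through the placeholder volume does not inflate the additive error beyond $\mathcal{O}(\epsilon T)$; I expect this step, together with the careful composition of the speed-rounding multiplicative error with the cumulative additive errors to obtain the precise ratio $(1+6\epsilon)/(1-2\epsilon-\epsilon^2)$, to be the most technically involved part of the proof.
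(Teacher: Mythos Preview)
Your overall plan---feasibility via the configuration ILP~(\ref{eq:first})--(\ref{eq:last}), binary search on $T$, Proposition~\ref{procs} for source/sink---matches the paper, and your error accounting is in the right spirit. But there is a genuine gap in the runtime argument that would cause the proof to fail as written.

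You invoke Kannan's algorithm with running time $d^{\mathcal{O}(d)}$ where $d$ is the \emph{number of variables}. You correctly compute that the ILP has $|S|\cdot|\boldsymbol{C}|$ configuration variables with $|\boldsymbol{C}|=2^{\Theta(1/\epsilon^2\log 1/\epsilon)}$, so $d$ is already exponential in $1/\epsilon$. Then $d^{\mathcal{O}(d)}=2^{\mathcal{O}(d\log d)}$ is \emph{doubly} exponential in $1/\epsilon$, not the claimed $2^{\mathcal{O}(1/\epsilon^4\log^2 1/\epsilon\,\log(s_{max}/s_{min}))}$; your arithmetic simply does not follow. The paper avoids this by using the ILP algorithm of~\cite{jansen2018integerlong,jansen2021eptas}, whose running time is governed by the number of \emph{constraints} (and the magnitude of the coefficients) rather than the number of variables. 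Here the constraint count is only
\[
|S|\,|P'|\,|\Gamma|^2+|P|\,|\Gamma|^2+|S|=\mathcal{O}\!\Bigl(\tfrac{1}{\epsilon^4}\log\tfrac{1}{\epsilon}\,\log\tfrac{s_{max}}{s_{min}}\Bigr),
\]
which is polynomial in $1/\epsilon$, and plugging this into that solver is exactly what yields the single-exponential bound. So the missing ingredient is not a minor citation swap: you need an ILP solver parameterised by the number of rows, and you need to verify that the coefficients of the configuration ILP are small enough for it to apply.
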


\begin{proof}
The number of constraints is 
\[|S||P'||\Gamma^2| + |P||\Gamma^2| + |S| = \mO(\frac{1}{\epsilon^4} \log\frac{1}{\epsilon} \log\frac{s_{max}}{s_{min}})\]

where $|S| = \mO(\frac{1}{\epsilon}\log\frac{s_{max}}{s_{min}})$ is the number of processor speeds after simplification, $|P| = \frac{1}{\epsilon} (\log\frac{s_{max}}{s_{min}} + \log\frac{1}{\epsilon})$ is the number of task sizes after simplification, $|P'| = \frac{1}{\epsilon}\log\frac{1}{\epsilon}$ is the number of different processing times after simplification, and $|\Gamma| = \frac{1}{\epsilon}$ is the number of different communication times after simplification. The number of variables is 
\[|S||\boldsymbol{C}| + |S||P||\Gamma^2| = 2^{\mathcal{O}( 1/\epsilon^2\log1/\epsilon 
+ \log\log\frac{s_{max}}{s_{min}}})\]

where $|\boldsymbol{C}| = 2^{\mathcal{O}({1/\epsilon^2\log1/\epsilon})}$ is the number of configurations.

Using the same ILP solver as before \cite{jansen2018integerlong, jansen2021eptas}, we have the following complexity for the ILP.
\[ 2^{\mathcal{O}({1/\epsilon^4\log^2 1/\epsilon \log\frac{s_{max}}{s_{min}}})} \mathcal{O}(\log N) \]

Including the binary search and pre-processing, and applying proposition \ref{procs}, for the EPAS it is 
\[2^{\mathcal{O}({1/\epsilon^4\log^2 1/\epsilon \log\frac{s_{max}}{s_{min}}})} \mathcal{O}(\log^2 N) + \mathcal{O}(N\log N)\]

The errors from the simplification steps are similar to before, except for the rounding of processor speeds which adds another $\epsilon T$ to the upper bound, giving 
\[ (1-2\epsilon-\epsilon^2)T \leq OPT \leq (1+5\epsilon)T \]
\[ \frac{(1+6\epsilon)}{(1-2\epsilon-\epsilon^2)} \, OPT \]
\end{proof}

\section{Equal Processing Costs} \label{sec:exact}
While the general problem $Q|\forkjoin, c_{ij}|\Cmax$ is sNP-hard, working with fixed computation costs can make intractable scheduling problems tractable \cite{DavidaLinton:76:A-new-algorithm-for-the-scheduling} 
or otherwise give them faster solutions, although that is not always the case \cite{Lenstra78}. 

For the following we look at scheduling on related processors with a workload of \textbf{equal} processing times.
Although this problem's complexity is still open, it is at least as hard as the version with homogeneous systems which is in turn harder than the open problem $P|r_j, p_j=p|\sum U$ in scheduling with release times and deadlines \cite{fork-join}.

\subsection{Approximation algorithm for $Q| fork-join, c_{ij}, p_j=p|\Cmax$ } \label{sec:bipartite}

We start with a guaranteed approximation algorithms that uses a simplification together with formulating the problem as finding a bipartite graph matching. Unlike with homogeneous processors \cite{fork-join}, this approach cannot solve the problem exactly, as not all schedules can be represented by a bipartite graph. A bipartite \emph{hypergraph} could fully represent this scheduling problem, but its matching problem is intractable, with other approximate solutions, hence not helpful here. 

Beginning with the simplification, we limit the resolution of start times to $p / s_\m$ on each processor $\m \in \M$, and then we have the following simple lemma \ref{lem:bipartite}, where $OPT$ is the optimal schedule length to the original instance and $T^*$ is the optimal schedule length to the simplified instance. 

\begin{lemma} \label{lem:bipartite}
$T^* - (p/ s_{min}) \leq OPT \leq T^*$
\end{lemma}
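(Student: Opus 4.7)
The plan is to prove the two inequalities separately. The upper bound $OPT \leq T^*$ follows from the observation that the simplification only restricts the set of allowed schedules: any simplified-feasible schedule, where start times on processor $\m$ are confined to multiples of $\p/s_\m$, is still a valid schedule in the original instance with the same length, so the original optimum cannot be worse.

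For the lower bound $T^* \leq OPT + \p/s_{min}$, I would start from an optimal original schedule of length $OPT$ and construct a simplified-feasible schedule by keeping the same task-to-processor assignments and the same $\leq_\sigma$ ordering, then rounding each task's start time $\sigma_j$ on processor $\m$ up to $\sigma_j' = \lceil \sigma_j/(\p/s_\m)\rceil \cdot (\p/s_\m)$. Two things need to be checked. First, feasibility: for two consecutive tasks $i \leq_\sigma j$ on the same processor $\m$, the relation $\sigma_j \geq \sigma_i + \p/s_\m$ holds in the original, and because $\lceil \cdot\rceil$ is monotone and shifting its argument by a multiple of $\p/s_\m$ shifts the ceiling by the same amount, $\sigma_j' \geq \sigma_i' + \p/s_\m$ follows directly. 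The source-to-remote-branch constraint $\sigma_j \geq \p_{src}/s_{\m_{src}} + \yi_j$ also survives rounding, since rounding only increases start times.

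Second, the shift is bounded. By construction $\sigma_j' - \sigma_j < \p/s_\m \leq \p/s_{min}$, so each branch's completion time also increases by strictly less than $\p/s_{min}$. The sink's earliest feasible start time is determined by the maximum over branches of (completion time $+$ outgoing communication), and the maximum of quantities each shifted by less than $\p/s_{min}$ is itself shifted by less than $\p/s_{min}$. Therefore the overall schedule length increases by less than $\p/s_{min}$, which yields $T^* \leq OPT + \p/s_{min}$.

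The main obstacle I expect is handling the sink cleanly. If one reads the simplification as also demanding the sink's start time to be a multiple of $\p/s_{\m_{sink}}$, then an additional rounding step would contribute a further $\p/s_{\m_{sink}}$ and risk doubling the gap. The stated bound of $\p/s_{min}$ indicates that the resolution restriction is intended for the branch tasks, where the combinatorial assignment choices live, while the sink's start time is simply taken to be its earliest feasible value given the branch placements. I would state this scoping explicitly at the outset of the argument and then use it to conclude with the single-level rounding bound above.
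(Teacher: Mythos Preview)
Your proof is correct and follows essentially the same approach as the paper: the upper bound comes from the simplified instance being a restriction of the original, and the lower bound from rounding up start times in an optimal original schedule and bounding each shift by $p/s_{min}$. Your treatment is in fact more careful than the paper's terse argument, which neither verifies feasibility of the rounded schedule nor addresses the sink explicitly; your scoping of the grid restriction to branch tasks is a reasonable reading and is what makes the single $p/s_{min}$ bound go through.
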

\begin{proof}
Consider an optimal schedule to the original instance. This can be transformed to a valid schedule for the simplified instance by rounding up all task start times to multiples of $p/ s_\m $ on each processor $\m \in \M$. As we round up, the schedule length of this schedule cannot be less than $OPT$, but needs to be less than the optimal schedule for the simplified instance, hence $OPT \leq T^*$

As no task can start later in this schedule by more than $p/ s_{min}$, it follows $T^* - (p/ s_{min}) \leq OPT$.

\end{proof}

In our proposed Algorithm \ref{alg:bipartite}, a bipartite graph $G=(U,V,E)$ is created for the simplified instance, where nodes $U$ represent tasks, nodes $V$ represent time slots on all processors (in $p/s_\m$ resolution), and edges $E$ connect each task to its viable time slots.

If a maximal matching for this graph includes less than $|U|=n$ edges, the original problem $P|\forkjoin, p_j=1, c_{ij}|C_{max}$ with bound $T$ is not feasible. In a maximal matching with $|U|$ edges, each edge represents an assignment of a task to a processor and start time, while no time slot is assigned to more than once, so the schedule is valid.

\begin{algorithm}
Binary search over $T$:\\
\Indp
Create graph $G(U,V,E)$ \;
\For{$\forall j \in J$}{$U \leftarrow U \cup \{u_j\} $}
\For{$\forall \m \in \M$}{
\For{$\forall \sigma \in \{ i \frac{p}{s_\m} \ | \ i \in \{ 1, 2,..., \lfloor \frac{T}{p/ s_\m} \rfloor \}\}$}{$V \leftarrow V \cup \{v_{m,\sigma}\} $}
}
\tcc{add edges}
\For{$\forall u_j \in U$}{
\For{$\forall v_{m,\sigma} \in V$}{
$E \leftarrow E \cup \{e_{j,m,\sigma}\}$ \tcc{edge from $u_j$ to $v_{m,\sigma}$}
}
}


\medskip
Find if there exists a feasible matching for the bipartite graph $G(U,V,E)$ using Ford–Fulkerson algorithm \;

\lIf{feasible matching exists}{try lower $T$}
\lElse{try higher $T$}
\Indm
\tcc{create schedule from bipartite matching}
\For{$\forall v_{j,m,\sigma} \in E$}{
schedule task $j$ to processor $m$ at time $\sigma$
}

\caption{Formulation as bipartite graph matching problem}
\label{alg:bipartite}
\end{algorithm}

\medskip
\begin{proposition} \label{prop:bipartite}
An ($OPT + \frac{p}{s_{min}}$)-schedule can be found for $P|\forkjoin, c_{ij}, p_j=p|C_{max}$ in $\mO(|\M|^2|J|^3 \log |J|)$ time.
\end{proposition}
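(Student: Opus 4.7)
The plan is to combine the additive approximation from Lemma~\ref{lem:bipartite} with a correctness argument for the bipartite matching formulation and then bound the runtime of Algorithm~\ref{alg:bipartite}. Although the problem cannot be captured exactly by a bipartite graph (as the paper notes), the point is that it can be captured for the \emph{simplified} instance, whose optimum is already within $p/s_{min}$ of $OPT$.

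The first step is the approximation bound. The algorithm binary searches for the smallest $T$ admitting a $|J|$-saturating matching in $G(U,V,E)$; this smallest $T$ is precisely the optimum $T^*$ of the simplified instance. Lemma~\ref{lem:bipartite} then gives $T^* \le OPT + p/s_{min}$, so the schedule reconstructed from the matching meets the claimed guarantee. Next, I would verify that the bipartite encoding is faithful to the simplified instance: the edge set must include $e_{j,m,\sigma}$ only when placing task $j$ on processor $m$ at start time $\sigma$ is viable, i.e.\ $\sigma \ge \p_{src}/s_{m_{src}} + \yi_j$ when $m \neq m_{src}$ (and $\sigma \ge \p_{src}/s_{m_{src}}$ otherwise), and $\sigma + \p/s_m + \yo_j \le T - \p_{sink}/s_{m_{sink}}$ when $m \neq m_{sink}$. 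Because the slots on processor $m$ are disjoint intervals of width exactly $\p/s_m$, any matching that saturates $U$ assigns each task to a distinct slot on some processor, giving a conflict-free schedule of length at most $T$; conversely, any valid schedule for the simplified instance induces such a saturating matching.

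For complexity, I would observe that only the first $|J|$ slots per processor can ever be used, so $|V| = \mO(|\M||J|)$, $|U| = |J|$, and $|E| = \mO(|\M||J|^2)$. A Ford--Fulkerson maximum matching in $\mO(|V|\cdot|E|) = \mO(|\M|^2 |J|^3)$ per feasibility test suffices. The binary search ranges over the $\mO(|\M||J|)$ distinct candidate makespans obtained from integer multiples of $\p/s_m$ across the processors, contributing an $\mO(\log |J|)$ factor. Multiplying gives the claimed $\mO(|\M|^2 |J|^3 \log |J|)$ bound, which the outer schedule reconstruction step does not asymptotically affect.

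The main obstacle I anticipate is cleanly specifying the viability predicate for the edges and confirming that per-edge viability alone is sufficient: in the fork-join structure, after fixing the source and sink processors, the constraints on each branch task decouple from those on every other branch task, which is exactly what makes a bipartite (rather than hypergraph) model correct for the simplified instance. A secondary subtlety is the restriction to $\mO(|J|)$ slots per processor in the complexity argument, which must be justified by noting that any schedule uses at most $|J|$ slots overall and that slots beyond the $|J|$-th on any processor cannot be matched.
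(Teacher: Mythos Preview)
Your approach matches the paper's---Lemma~\ref{lem:bipartite} for the additive error, bipartite matching for feasibility of the simplified instance, binary search over $T$---but your complexity accounting differs in a way that hides a small gap. The paper argues that the \emph{total} number of slots across all processors is $|V|=\mO(|J|)$ (near the optimal $T$ the combined capacity of the processors is on the order of the number of tasks, and runs of unusable slots caused by large communication costs can be collapsed), so one Ford--Fulkerson call costs $\mO(|J|^3)$; the $|\M|^2$ in the final bound then comes from iterating over all source/sink processor placements (Proposition~\ref{procs}). You instead take the looser $|V|=\mO(|\M||J|)$ and extract $|\M|^2$ from the matching bound itself, but you do not charge anything for the source/sink iteration even though you rely on it for correctness (``after fixing the source and sink processors''). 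Your final figure coincides with the paper's only because of this omission; if you include the source/sink loop on top of your matching cost, you overshoot by an extra $|\M|^2$. To close the gap, either tighten $|V|$ to $\mO(|J|)$ as the paper does and attribute the $|\M|^2$ to the source/sink loop, or argue explicitly that for identical processors only $\mO(1)$ distinct source/sink placements need be tried.
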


\begin{proof}
In a maximal matching with $|U|$ edges, each edge represents an assignment of a task to a processor and start time, while no time slot is assigned to more than once, so the schedule is valid.

The reduction part in Algorithm \ref{alg:bipartite} can be done in $\mO(|J|^2)$ time, as the total number of slots (vertices in set $V$) is in the order of the total number of tasks, $\mO(|J|)$ (although the edge case of very high communication costs could add empty slots, chunks of extra slots can be optimised away), and $|U| = |J|$, so the number of edges in $E$ connecting both sets is in the order of $|J|\times|J|$. 

The maximal matching problem can be solved as a network flow problem in $\mO(V^2 U)$ time with the Ford–Fulkerson algorithm. 
To find the minimal makespan, this process has to be repeated with a binary search over $T$, i.e. $\mO(\log |J|)$ time, and repeated $|\M|^2$ times from each combination of $\m_{src}, \m_{sink}$.

With $|U|=|J|$ and $|V|=\mO(|J|)$, the total complexity is $\mO(|\M|^2|J|^3 \log |J|)$.
\end{proof}

We would like to point out that this has been better formulated than the version laid out here for homogeneous systems \cite{fork-join}, requiring only incoming communications to be simplified, which improves the bound. The complexity has also been better evaluated, from the fact that the number of slots is in the order of the number of tasks.

\subsection{Two Processors $Q2|\forkjoin, c_{ij}, p_j=p|C_{max}$} \label{sec:2proc}
Scheduling to two heterogeneous processors with equal processing costs is tractable just as it is for homogeneous processors (the problem {$P2|\forkjoin, c_{ij}, p_j=p|C_{max}$} \cite{fork-join}).

Consider the two cases separately where (case 1) the source and sink tasks are on the same processor and (case 2) where they are on separate processors. We want to reuse the same algorithms for each case (\textsc{P2-Sched1} and \textsc{P2-Sched2} \cite{fork-join}) as in the homogeneous case, but need to prove that they are still valid and obtain optimal length.

\textsc{P2-Sched1} casts the problem as a single processor throughput problem $1|r_j, p_j=p|\Sigma U_j$ on $p_2$. The tasks that are rejected then need to be executed on $p_1$ and a binary search finds the optimal solution. 
\textsc{P2-Sched1} uses an algorithm for the release time and deadline throughput problem

\textsc{P2-Sched2} is a greedy algorithm that schedules the tasks on the sink processor $p_2$ as they become available, ordered by non-increasing $\yo_i$. All tasks not fitting on $p_2$ are put on source processor $p_1$ in non-increasing $\yo_i$.
Both algorithms are used with a binary search to find the optimal.

\begin{proposition}
\textsc{P2-Sched1} \& \textsc{P2-Sched2} obtain optimal solutions for two uniformly related heterogeneous processors $Q2|\forkjoin, c_{ij}, p_j=p|C_{max}$.
\end{proposition}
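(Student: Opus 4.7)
The plan is to adapt the two case-analysis algorithms from the homogeneous setting by observing that, once we have fixed the processors of $j_{src}$ and $j_{sink}$, the only heterogeneity that affects the sub-scheduling on each machine is a uniform rescaling of processing time: every branch task takes exactly $p/s_1$ on $p_1$ and $p/s_2$ on $p_2$. Thus on each machine individually the tasks remain equal-length, and the release times and deadlines induced by the fork-join/communication structure are fully determined by $\yi_j$, $\yo_j$, $p_{src}/s_{m_{src}}$, $p_{sink}/s_{m_{sink}}$, and the target makespan $T$. By Proposition \ref{procs} we enumerate the $|S|^2$ source/sink placements and take the best, so it suffices to prove optimality within each fixed placement.

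For case 1 (source and sink on the same processor, say $p_1$), fixing $T$ yields on $p_2$ an instance of $1|r_j,p_j=p/s_2|\sum U_j$, where $r_j=p_{src}/s_1+\yi_j$ and $d_j=T-\yo_j-p_{sink}/s_1$. This is exactly the throughput problem \textsc{P2-Sched1} invokes, and its optimal solution is unchanged by replacing $p$ with $p/s_2$ since only equal-length tasks matter. The rejected tasks, all of identical length $p/s_1$ and incurring no communication on $p_1$ (local to source and sink), can be packed back-to-back immediately after $j_{src}$; $T$ is feasible iff their count fits in $\lfloor(T-p_{src}/s_1-p_{sink}/s_1)/(p/s_1)\rfloor$ slots. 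A binary search over the $\mathcal{O}(|J|)$ candidate values of $T$ (i.e. multiples of $p/s_1$ and $p/s_2$ shifted by $\yi_j$, $\yo_j$, and the source/sink offsets up to polynomial range) yields the optimal $T$ in this case.

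For case 2 (source on $p_1$, sink on $p_2$), a task on $p_2$ has execution time $p/s_2$, incurs $\yi_j$ on entry and $0$ on exit, while a task on $p_1$ has execution time $p/s_1$, incurs $0$ on entry and $\yo_j$ on exit. For a fixed $T$, \textsc{P2-Sched2} greedily places tasks on $p_2$ in non-increasing $\yo_j$ order (putting high-$\yo$ tasks on $p_2$ because rejecting them to $p_1$ is more costly) and pushes the rest onto $p_1$ in the same order. The original exchange argument for the homogeneous case swaps pairs of tasks on the same processor and uses that both have equal length, so it transfers verbatim once we observe that per-processor lengths $p/s_1$ and $p/s_2$ are still constant within each side. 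The only quantity that changes in the makespan accounting is the uniform rescaling of processing times by $1/s_m$, which is already reflected in the release/deadline expressions.

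The main obstacle is verifying that the candidate set for the binary search remains polynomial in the heterogeneous setting: possible makespans are determined by endpoints of schedules built from integer multiples of $p/s_1$ and $p/s_2$, offset by the $2|J|$ communication values and the source/sink execution times, giving $\mathcal{O}(|J|^2)$ candidates per case. Combined with the $\mathcal{O}(\log|J|)$ search and the $\mathcal{O}(|J|)$ cost per feasibility check, the whole procedure remains polynomial, and since both \textsc{P2-Sched1} and \textsc{P2-Sched2} are optimal within their case, taking the minimum over the four (source, sink) placements yields the optimal schedule for $Q2|\forkjoin,c_{ij},p_j=p|C_{max}$.
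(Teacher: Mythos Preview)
Your approach is essentially the same as the paper's: both cases are reduced to the homogeneous argument by observing that, under related heterogeneity, all tasks on a fixed processor still have identical processing time, so the throughput reduction for case~1 and the greedy/exchange argument for case~2 go through unchanged. You supply more detail than the paper does on the binary-search candidate set and the explicit release/deadline formulas, which the paper omits. One minor imprecision: in case~2 you describe the exchange argument as swapping pairs on the \emph{same} processor, whereas the decisive step (as the paper states it) is that a swap \emph{between} $p_1$ and $p_2$ cannot shorten $p_2$'s finish time nor improve the worst outgoing-communication arrival from $p_1$; the within-processor ordering is the easy part. This does not affect correctness since you are in any event deferring to the homogeneous proof, but the characterization of that proof is slightly off.
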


\begin{proof}
\textsc{P2-Sched1} is optimal for case 1, as all tasks have the same processing time on each processor, hence the throughput argument where rejected tasks need to be scheduled on $p_1$ remains valid as in the homogeneous case, irrespective of a difference in processing speed between $p_1$ and $p_2$. 

\textsc{P2-Sched2} is optimal for case 2 as the greedy approach fills $p_2$ maximally. Only swapping tasks between $p_1$ and $p_2$ could improve the schedule. However, by the greedy approach, that cannot shorten the finish time of $p_2$, nor can it improve the outgoing communication arrival for tasks on $p_1$. Due to the related heterogeneity, all tasks have the same execution time on the respective processor. 
\end{proof}

\subsection{Unlimited Processors $Q\infty|\forkjoin, c_{ij}, p_j=p|\Cmax$} \label{sec:UnlimitedProc}
This problem is also tractable, similarly as for $P\infty|\forkjoin, c_{ij}, p_j=p|\Cmax$ \cite{fork-join}. Divide the tasks between those scheduled remotely, $J_{remote}$, and those scheduled locally, $J_{local}$, and search for a partitioning that gives the lowest schedule length. To find the schedule length given a partitioning, put each task in $J_{remote}$ on a separate processor of the faster type, and schedule $J_{local}$ using \textsc{P2-sched1/2}, potentially trying all combinations of processor types for the 2 local processors (the application of proposition \ref{procs}), repeating \textsc{P2-sched1/2} case by case, which adds $|S|^2$ to the complexity. Only the fastest $|J|$ processors need to be considered.

\subsection{Partially Equal Communications $Q|\forkjoin, p_j=p, c_{in} = c, c_{out}|\Cmax$}

Assume that outgoing communication times are arbitrary, with incoming communication times being equal (the reverse would be equivalent). This is also polynomially solvable. To briefly describe the algorithm, schedule the tasks in decreasing order of their communication times, filling the local processor first, and then after that, appending them to a processor that results in the earliest finish time at each stage. 

Again, proposition \ref{procs} can be applied here, and whether to have the source and sink together or separate, can be known by trying both. In the case that the source and sink are separate, it would be optimal to fill the sink processor first (with the most communication intensive tasks). For the remaining tasks, as all of them have equal processing time, it only matters that the most communication intensive tasks are finished earliest (the longest outgoing communications started as early as possible).

\subsection{Grouped Processors}
When we assume that communications are not incurred between processors of the same type due to close proximity or closely shared memory, this problem has similarities to scheduling with two processors, $Q2$. For the case where source and sink are on different processors types, the problem is similar enough that the same algorithm used for $Q2$ can work here. However, when source and sink are on the same processor type, with processors of the other type containing remote tasks, it again comes down to the open problem $P|p=1,r_j|\Sigma U$. A bounded approximate solution still exists by the same technique in section \ref{sec:bipartite}. 



\section{Conclusion}
In this paper we investigated the scheduling of fork-join structures, which are typical of highly parallel workloads, to heterogeneous processor systems. We presented an EPAS for scheduling to related heterogeneous processors, and some polynomial time algorithms. We see that scheduling with two types of processors (which has special relevance as CPU--accelerator systems) has largely the same results as the general case. All of these results also apply to the equivalent problems in scheduling with release times and deadlines. 

\printbibliography
\end{document}